\newcommand*\patchAmsMathEnvironmentForLineno[1]{%
	\expandafter\let\csname old#1\expandafter\endcsname\csname #1\endcsname
	\expandafter\let\csname oldend#1\expandafter\endcsname\csname end#1\endcsname
	\renewenvironment{#1}%
	{\linenomath\csname old#1\endcsname}%
	{\csname oldend#1\endcsname\endlinenomath}}%
\newcommand*\patchBothAmsMathEnvironmentsForLineno[1]{%
	\patchAmsMathEnvironmentForLineno{#1}%
	\patchAmsMathEnvironmentForLineno{#1*}}%
\begin{document}

\begin{frontmatter}

\title{A Jointly Optimal Design of Control and Scheduling in Networked Systems under Denial-of-Service Attacks\thanksref{footnoteinfo}} 

\thanks[footnoteinfo]{This paper was not presented at any IFAC 
meeting. Corresponding author Jingyi Lu.}

\author[Germany]{Jingyi Lu}\ead{jingyi.lu@upb.de},    
\author[Australia]{Daniel E.\ Quevedo}\ead{daniel.quevedo@qut.edu.au}              

\address[Germany]{Department of Mechanical Engineering,
	University of Paderborn, Germany.}  
\address[Australia]{School of Electrical Engineering \& Robotics, Queensland University of Technology, Brisbane, Australia.}             

\begin{keyword}                           
networked control, cyber attacks, Kalman filter, dynamic programming, stochastic game              
\end{keyword}                             

\begin{abstract}                          

We consider the joint design of control and scheduling under stochastic Denial-of-Service (DoS) attacks in the context of networked control systems. A sensor takes measurements of the system output and forwards its dynamic state estimates to a remote controller over a packet-dropping link. The controller determines the optimal control law for the process using the estimates it receives. An attacker aims at degrading the control performance by increasing the packet-dropout rate with a DoS attack towards the sensor-controller channel. We assume both the controller and the attacker are rational in a game-theoretic sense and establish a partially observable stochastic game to derive the optimal joint design of scheduling and control. Using dynamic programming we prove that the control and scheduling policies can be designed separately without sacrificing optimality, making the problem equivalent to a complete information game. We employ   Nash Q-learning to solve the problem and prove that the solution is guaranteed to constitute an $\epsilon$-Nash equilibrium. Numerical examples are provided to illustrate the tradeoffs between control performance and communication cost. 
\end{abstract}

\end{frontmatter}

\section{Introduction}
Cyber-physical systems (CPSs) have successfully integrated computing devices and networking infrastructure to remotely sense and control the physical world. This enables the development of exciting applications in manufacturing, transportation, and smart grid \cite{kim2012cyber}. However, the wireless communication among components of CPSs introduces vulnerabilities against malicious adversaries \cite{chong2019tutorial}. For example, data integrity and availability can be easily deteriorated by replay attacks \cite{mo2009secure} and denial-of-service (DoS) attacks \cite{zhang2015optimal}. This weakness brings a great threat to the security and safety of CPSs, especially when operating in closed-loop. This raises the issue of how to design systematic prevention mechanisms to enhance system security. 

Game theory is often employed as a tool for resilient design and analysis in the presence of adversaries \cite{manshaei2013game}. Assuming the rationality of the adversary,  game-theoretic frameworks can model the interactive decision-making process and yield non-conservative strategies. They have been widely applied to the secure design of control and estimation. For example, Zhu and  Ba\c{s}ar \cite{zhu2015game}  formulated an infinite time horizon zero-sum game to derive a secure control policy that is resilient against various types of cyberattacks. Li \emph{et al.}\ \cite{li2016sinr} presented a power control strategy for remote state estimation in the presence of  DoS attacks through a stochastic game.  The idea has been recently extended to address linear quadratic Gaussian control problems in \cite{zhang2020stochastic}. Miao \emph{et al.}\ \cite{miao2018hybrid} proposed a stochastic game with hybrid states for designing optimal switching control policies to counteract data injection attacks. 

\begin{figure}[t]
	\centering
	\includegraphics[width=0.8\linewidth]{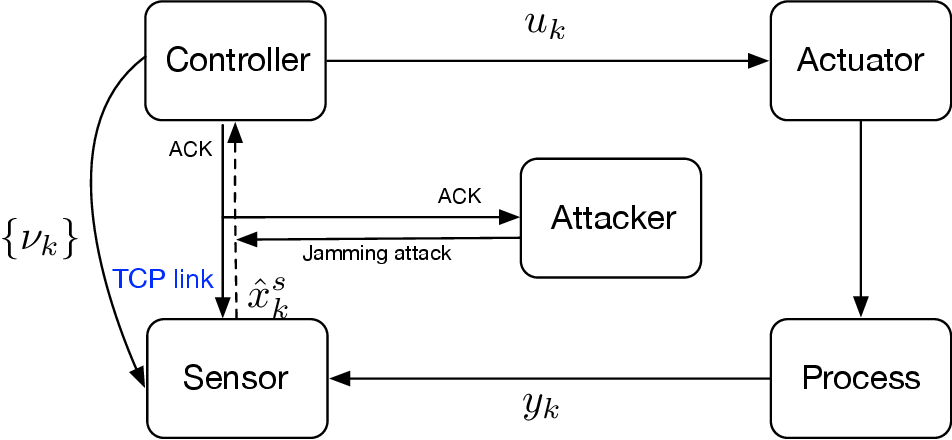}
	\caption{\scriptsize Networked control system with DoS attacks: a local sensor takes measurements of the process outputs $y_k$, generates a state estimate $\hat{x}_k^s$ and transmits the estimate to the remote controller over a stochastic channel according to the scheduling command $\nu_k$ \textcolor{black}{sent from the controller to the sensor via a reliable link}.  After receiving the packet, the TCP link sends an acknowledgment signal (ACK). An attacker intends to degrade the control performance by using information from the acknowledgment signal to selectively jam the communication channel between the controller and the sensor and thereby increasing the packet dropout probability.}
	\label{fig:figcss}
\end{figure}
In this work, we employ the game-theoretic framework to study a secure co-design of control and scheduling to save communication costs and enhance CPS security.
As shown in Fig.\ \ref{fig:figcss}, we consider a remote control problem where the system state estimate is sent from the sensor to the controller across a TCP link affected by random packet dropouts. The communication channel suffers from potential DoS jamming attacks launched by an adversary. The attacker intends to degrade the control performance and increase the controller's communication costs. Since the controller and the adversary have opposite interests, we model this interactive decision-making process by formulating a zero-sum stochastic game with the controller and the adversary taken as two antagonistic players.

To date, control-scheduling co-design has been extensively studied without considering security issues and proved to be an efficient approach to reduce communication cost in networked control systems \cite{cooper1971optimal,antunes2012dynamic,peters2016controller,kogel2019combined,wildhagen2020scheduling}. \textcolor{black}{However, due to the coupling effects of estimation, control and scheduling, co-design methods commonly lead to optimization problems of increased computational complexity, and therefore fail to provide efficient optimal designs.} Similar issues arise when the co-design is conducted in a game-theoretic framework. Specifically, to derive the jointly optimal strategy for the control and scheduling, both the system input and scheduling command have to be taken as actors in the stochastic game. Since the controller output is commonly a continuous-valued variable and the scheduling command is a discrete variable, the resulting game constitutes a game with hybrid states which are difficult to solve \cite{miao2018hybrid}. An additional difficulty arises in the "natural" situation where the cost function is related to the system state  $x_k$, which is randomly available to the controller but  unobserved to the attacker, leading to a  partially observable stochastic game. 

To tackle the challenges in computation, in the present work we explore the possibility of separating the control and scheduling. With the help of dynamic programming, we establish that when the controller works as the scheduler, the optimal feedback gain is independent of packet arrival distributions and realizations; conversely, the optimal scheduling command is independent of the system state estimates. Based on these properties, we can simplify the partially observable stochastic game with hybrid states into a complete information stochastic game with countably infinite discrete states. \textcolor{black}{We further approximate this infinite-state average cost game with a finite-state discounted cost game to enable the application of Nash Q-learning. We prove that the resulting solution provides an $\epsilon$-equilibrium for the original stochastic game. } 
\par We emphasize that, in contrast to existing co-design works, including \cite{zhang2020stochastic,knorn2017optimal,leong2017event}, where the scheduling variable is \emph{a priori} assumed to be independent of the system state, in the present work we do not impose any assumptions on the structure of the solution. Therefore,  the current separate design procedure does not sacrifice optimality. 

The contribution of this work is three-fold: (1) we establish a partially observable stochastic game with hybrid states to derive the jointly optimal scheduling and control strategy under a DoS attack in the context of remote control with random packet dropouts; (2) we prove that the design of control and scheduling can be separated without sacrificing optimality and show that the game can be reduced to a complete information stochastic game with countably infinite discrete states; (3) we show that Nash Q-learning can efficiently solve such problems and guarantee an $\epsilon$-equilibrium.

The remaining parts of this manuscript are organized as follows: Section \ref{sec2} describes the model of the networked control system and the problem setup; Section \ref{sec3} formulates the partially observable stochastic game and analyzes the optimality of the separate design; Section \ref{sec6} presents a truncated policy derived from Nash Q-learning and proves the $\epsilon$-equilibrium of the resulting solution; Section \ref{sec4} provides a numerical example to verify the theoretical results and illustrate the tradeoffs between the communication cost and the control performance; Section \ref{sec5} draws conclusions.

\section{PROBLEM FORMULATION}
\label{sec2}
\subsection{System model}
Consider a linear dynamic system
\begin{align}
x_{k+1} = & A x_k + B u_k +w_k,\label{m1}\\
y_k = & C x_k + v_k,\label{m2}
\end{align}
where $x_k\in\mathbb{R}^{n_x}$, $u_k\in\mathbb{R}^{n_u}$ and $y_k\in\mathbb{R}^{n_y}$ denote the system state, input and output respectively. Without loss of generality, assume $x_0$ is a zero-mean Gaussian noise. $w_k\in\mathbb{R}^{n_x}$ and $v_k\in\mathbb{R}^{n_y}$ are i.i.d.\ Gaussian noises with zero mean, i.e., $\mathbb{E}(w_k)=0$, $\mathbb{E}(v_k)=0$, and $\mathbb{E}(w_iw_j^\top)=\mathbb{E}(v_iv_j^\top)=0$ if $i\neq j$. Denote $\mathbb{E}(w_kw_k^\top)=Q$ and $\mathbb{E}(v_kv_k^\top)=R$, with $Q>0$ and $R>0$. {\color{black}{Assume that the pair $(A,B)$ is stabilizable}} and $(A,C)$ is detectable. 

{\color{black}{As depicted in Fig. \ref{fig:figcss}, a local sensor takes measurements of the system output $y_k$ and generates a state estimate which is denoted as $\hat{x}_k^s$. This estimate is wrapped as a packet and forwarded to the remote controller over a TCP link.	The transmission commits to the command signal $\nu_k\in\{0,1\}$, which is determined by the controller and transmitted to the sensor via a reliable link with no packet dropouts. The optimization of $\nu_k$ is detailed in Section \ref{sec3}. Assume the TCP link has i.i.d. packet dropouts and the probability of a successful reception is $\lambda$. This set up is commonly seen in many remote control problems \cite{leong2018transmission,leong2018transmission1,farjam2019timer}.}}

Define $\gamma_k^c$ as an indicator variable of successful reception. We define
\begin{align}
\gamma_k^c\triangleq\left\{\begin{array}{ll}
0 & \quad \textcolor{black}{\text{if no signal is received,}} \\
1 & \quad \text{if $\hat{x}_k^s$ is successfully received,}
\end{array}\right. \nonumber
\end{align}
and note that
\begin{equation}
\begin{aligned}
&\mathbb{P}(\gamma_k^c=0\mid \nu_k=0) = 1, \\
&\mathbb{P}(\gamma_k^c=0\mid \nu_k=1) = 1-\lambda.
\end{aligned} \label{prob}
\end{equation}
The remote controller determines the input $u_k$ and the scheduling command $\nu_k$ based on the  received information. 
If the packet is successfully received, then the TCP-link will send an acknowledgment back to the sensor, from which the sensor can infer the value of $\gamma_k^c$ and compute its state estimates as described next. 

\subsection{State estimation}
Denote the information set at the sensor and at the controller at time $k$ as $\mathcal{I}_k^s$ and $\mathcal{I}_k^c$, respectively. According to the problem setup described above, we have
\begin{equation}
\begin{aligned}
&\mathcal{I}_k^s=\big\{y_0,\dots,y_k,\hat{x}_0^s,\dots,\hat{x}_k^s,\nu_0,\dots,\nu_k,\gamma_0^c,\dots,\gamma_k^c \big\},\\
&\mathcal{I}_k^c=\big\{\gamma_0^c\hat{x}_0^s,\dots,\gamma_k^c\hat{x}_k^s,\nu_0,\dots,\nu_k,\gamma_0^c,\dots,\gamma_k^c \big\}.
\end{aligned} \label{information}
\end{equation}
It is easy to figure out that \begin{equation}\mathcal{I}_k^c\subset \mathcal{I}_k^s. \label{subset}
\end{equation} {\color{black}{Denote the control policy adopted by the controller as $u_k=\pi_k^c(\mathcal{I}_{k}^c)$. If the sensor knows the function $\pi_k^c$ , it can easily infer $u_k$ from $\mathcal{I}_{k}^s$ since $\mathcal{I}_{k}^c$ is a subset of $\mathcal{I}_{k}^s$ and the sensor knows all information to reconstruct $u_k$.  In this case, the separation principle holds \cite{aastrom2012introduction}. The sensor can run a Kalman filter to obtain the optimal state estimate $\hat{x}_k^s$, which follows $\hat{x}_k^s = \mathbb{E}(x_k\mid \mathcal{I}_{k-1}^s,y_k,u_{k-1})$.}} Specifically, we have 
\begin{equation}
\begin{aligned}
\hat{x}_k^s =& A \hat{x}_{k-1}^s + B u_{k-1}+K_k\Big(y_k-C(A \hat{x}_{k-1}^s + B u_{k-1})\Big) \label{xks}\\
\tilde{x}_k =& (I_{n_x}-K_kC)A\tilde{x}_{k-1}+(I-K_kC)w_{k-1}-\textcolor{black}{K_kv_{k}} \\
P_k^s = &(I_{n_x}-K_kC)\Big(A P_{k-1}^sA^\top +Q\Big).
\end{aligned}
\end{equation}
Here $\tilde{x}_k=x_k-\hat{x}_k^s$ denotes the state estimation error at the sensor \cite{welch1995introduction}. $P_k^s$ denotes the estimation error covariance defined as $P_k^s=\mathbb{E}\left(\tilde{x}_k\tilde{x}_k^\top\right) $
and 
\begin{align}
K_k \triangleq (A P_{k-1}^sA^\top +Q)C^\top \Big(C(A P_{k-1}^sA^\top +Q)C^\top +R\Big)^{-1}.\nonumber
\end{align}
 Initialize $P_0^s$ with large values. According to \cite{anderson2012optimal},  
$P_k^s$ converges exponentially fast. Let $\bar{P}$ denote the steady-state error covariance after the Kalman filter reaches the steady state. {\color{black}{In Remark \ref{assumption_control}, we analyze the rationality of assuming the sensor's knowledge towards the control policy.}} 

Taking into account the packet dropouts, the optimal state estimate at the controller, denoted as $\hat{x}_k$, is derived in the following proposition. 
\begin{prop} \label{prop1}
The optimal state estimate 
follows the   recursion
\begin{align}
\hat{x}_k=\left\{\begin{array}{ll}
\hat{x}_k^s & \quad \text{if}~ {\color{black}{\gamma_{k}^c=1,}}\\
\bar{x}_k & \quad \text{if}~  {\color{black}{\gamma_{k}^c=0,}}
\end{array}\right. \label{update_state}
\end{align}
where \begin{equation}\bar{x}_k=A\hat{x}_{k-1}+Bu_{k-1}.\label{bar_x}
\end{equation}
Accordingly, the expected error covariance at the controller, denoted as $P_k$, follows
	\begin{align}
P_k =&\mathbb{E}((x_k-\bar{x}_k)(x_k-\bar{x}_k)^\top\mid \mathcal{I}_k^c) \nonumber\\
=& \left\{
\begin{array}{ll}
P_k^s& \quad \text{if}~  {\color{black}{\gamma_{k}^c=1,}}\\
f(P_{k-1}) & \quad \text{if}~  {\color{black}{\gamma_{k}^c=0,}}
\end{array}
\right.\label{update_error}
\end{align}
where $f(X)=AXA^\top+Q$.
\end{prop}
{\color{black}{
\begin{pf} If $\gamma_k^c=1$, $\hat{x}_k^s$ is successfully received by the controller. Since $\hat{x}_k^s$ is the optimal estimate of $x_k$, it is easy to obtain that 
\begin{align}
\hat{x}_k =& \arg\min_{\hat{x}_k}~ \mathbb{E}\big((x_k-\hat{x}_k)(x_k-\hat{x}_k)^\top\mid \mathcal{I}_k^c,u_{k-1}=\pi_{k-1}^c(\mathcal{I}_{k-1}^c)\big)\nonumber\\
 =& \arg\min_{\hat{x}_k}~ \mathbb{E}\big((x_k-\hat{x}_k)(x_k-\hat{x}_k)^\top\mid \hat{x}_k^s\big)\nonumber\\
=& \hat{x}_k^s.\nonumber
\end{align}
It can be further verified that the corresponding error covariance follows $P_k =P_k^s$ \cite{welch1995introduction}. Next, we detail the proof of the case $\gamma_k^c=0$. Assume that the latest successful reception occurs at time $k_0$ with $k_0<k$. Namely, $\gamma_{k_0}^c=1$ and $\gamma_{k_0+1},\dots,\gamma_{k}$ all equal to $0$. Since the command signal $\nu_k$ is determined by the controller at time $k$ before the transmission takes place. It only contains the information in $\mathcal{I}_{k-1}^c$, rather than $\mathcal{I}_{k-1}^s$ or $\mathcal{I}_{k}^s$. Therefore, the calculation of $\hat{x}_k$ can be simplified as
\begin{align}
\hat{x}_k =& \arg\min_{\hat{x}_k}~ \mathbb{E}\big((x_k-\hat{x}_k)(x_k-\hat{x}_k)^\top\mid \mathcal{I}_k^c\big)\nonumber\\
=& \arg\min_{\hat{x}_k}~ \mathbb{E}\big((x_k-\hat{x}_k)(x_k-\hat{x}_k)^\top\mid \hat{x}_{k_0}^s,u_{k_0},\dots,u_{k-1}\big)\nonumber\\
=& A^{k-k_0} \hat{x}_{k_0}^s + \sum_{i=1}^{k-k_0} A^{k-1}Bu_{k-i}. \label{pre}
\end{align}
(\ref{pre}) can be written into a recursive form as given in (\ref{update_state}). Accordingly, the evolution of the expected error covariance $P_k$ can be derived and provided in (\ref{update_error}). 
\end{pf}
}}

 We initialize $P_0$ as $P_0^s$. Define $\tau_k$ as the holding time, which indicates the time steps since the last successful transmission, i.e. 
\begin{align}
\tau_k=\left\{
\begin{array}{ll}
0 & \quad \text{if}~ {\color{black}{\gamma_{k}^c=1,}}\\
\tau_{k-1} + 1 & \quad \text{if}~ {\color{black}{\gamma_{k}^c=0,}}
\end{array}
.\right. \label{hold}
\end{align}
From (\ref{update_error}) and 
(\ref{hold}), we can express $P_k$ in terms of $\tau_k$ as $P_k=f^{\tau_k}(P_{k-\tau_k}^s)$,  which can be further simplified as $P_k=f^{\tau_k}(\bar{P})$ after the Kalman filter has reached the steady state. 
Our subsequent analysis  uses the following  properties of the state estimate and the estimation error. 
\begin{lem} \label{lemma1}
	The following facts are true:
	\begin{itemize}
		\item[(a)] $\mathbb{E}[(x_k-\bar{x}_k)\bar{x}_k^\top\mid \bar{x}_k]=0$;
		\item[(b)] $\mathbb{E}[(\hat{x}^s_k-\bar{x}_k)\bar{x}_k^\top\mid \bar{x}_k]=0$;
		\item[(c)] $\mathbb{E}\left((\hat{x}_k^s-\bar{x}_k)(\hat{x}_k^s-\bar{x}_k)^\top \mid \bar{x}_k\right)=  f(P_{k-1})-P_k^s$. 
	\end{itemize} \label{lem}
\end{lem}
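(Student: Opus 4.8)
The plan is to reduce all three statements to the orthogonality principle together with two identifications of $\bar{x}_k$: as the one‑step predictor of $x_k$ under the controller's information, and (through iterated conditioning) under the sensor's information. First I would record the preliminaries. Since the sensor knows the control policy, $u_{k-1}$ is $\mathcal{I}_{k-1}^s$‑measurable, and $\hat{x}_{k-1}$ is $\mathcal{I}_{k-1}^c$‑measurable, so $\bar{x}_k=A\hat{x}_{k-1}+Bu_{k-1}$ is measurable with respect to $\mathcal{I}_{k-1}^c\subset\mathcal{I}_k^s$. Because $w_{k-1}$ is zero‑mean and independent of the past,
\[
\mathbb{E}[x_k\mid\mathcal{I}_{k-1}^c]=A\hat{x}_{k-1}+Bu_{k-1}=\bar{x}_k ,
\]
and, using $\hat{x}_k^s=\mathbb{E}[x_k\mid\mathcal{I}_k^s]$, the tower property and $\mathcal{I}_{k-1}^c\subset\mathcal{I}_k^s$ give $\mathbb{E}[\hat{x}_k^s\mid\mathcal{I}_{k-1}^c]=\bar{x}_k$ as well. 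Finally, in the linear--Gaussian setting the relevant conditional error covariances are deterministic: $\mathbb{E}[\tilde{x}_k\tilde{x}_k^\top\mid\mathcal{I}_k^s]=\bar{P}$ at the sensor (by the steady‑state assumption $P_k^s=\bar{P}$), and $\mathbb{E}[(x_{k-1}-\hat{x}_{k-1})(x_{k-1}-\hat{x}_{k-1})^\top\mid\mathcal{I}_{k-1}^c]=P_{k-1}=f^{\tau_{k-1}}(\bar{P})$ at the controller.

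Parts (a) and (b) are then immediate. Conditioning first on $\mathcal{I}_{k-1}^c$ and using that $\bar{x}_k$ is $\mathcal{I}_{k-1}^c$‑measurable,
\[
\mathbb{E}[(x_k-\bar{x}_k)\bar{x}_k^\top\mid\mathcal{I}_{k-1}^c]=\big(\mathbb{E}[x_k\mid\mathcal{I}_{k-1}^c]-\bar{x}_k\big)\bar{x}_k^\top=0 ,
\]
and a further conditioning on $\bar{x}_k$ (tower property) yields (a). For (b) the same argument applies once $\hat{x}_k^s-\bar{x}_k$ is recognised as having zero conditional mean given $\mathcal{I}_{k-1}^c$, which was noted above.

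For (c) I would decompose $\hat{x}_k^s-\bar{x}_k=(x_k-\bar{x}_k)-\tilde{x}_k$ with $\tilde{x}_k=x_k-\hat{x}_k^s$, expand the outer product into four terms, and evaluate each, interpreting the conditioning $\mid\bar{x}_k$ as conditioning on $\mathcal{I}_{k-1}^c$ (which fixes both $\bar{x}_k$ and $\tau_{k-1}$). Writing $x_k-\bar{x}_k=A(x_{k-1}-\hat{x}_{k-1})+w_{k-1}$ gives $\mathbb{E}[(x_k-\bar{x}_k)(x_k-\bar{x}_k)^\top\mid\cdot]=AP_{k-1}A^\top+Q=f(P_{k-1})=f^{\tau_{k-1}+1}(\bar{P})$. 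For the cross terms, substitute $x_k=\hat{x}_k^s+\tilde{x}_k$; since $\tilde{x}_k$ is orthogonal to $\mathcal{I}_k^s$ (hence to $\hat{x}_k^s$ and to the $\mathcal{I}_k^s$‑measurable $\bar{x}_k$), each cross term collapses to $\mathbb{E}[\tilde{x}_k\tilde{x}_k^\top\mid\cdot]=\bar{P}$, and the last term equals $\bar{P}$ as well. Summing, $f^{\tau_{k-1}+1}(\bar{P})-\bar{P}-\bar{P}+\bar{P}=f^{\tau_{k-1}+1}(\bar{P})-\bar{P}$, which is the claim.

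The individual steps are routine; the only delicate point is the bookkeeping of which $\sigma$‑algebra each conditional expectation refers to, and the justification that the conditional covariances coincide with the deterministic Riccati/Lyapunov iterates $\bar{P}$ and $f^{\tau_{k-1}}(\bar{P})$ — this is exactly where the linear--Gaussian structure and the steady‑state assumption enter. One must also be mindful that the right‑hand side of (c) is $\tau_{k-1}$‑measurable but not in general $\bar{x}_k$‑measurable, so the conditioning in (c) should be read as conditioning on the controller's information through time $k-1$; the weaker $\mid\bar{x}_k$ versions of (a)--(b) then follow a fortiori.
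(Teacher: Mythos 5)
Your proof is correct, and for parts (a)--(b) it takes a genuinely different (and arguably cleaner) route than the paper. The paper proves (b) by explicitly unrolling both recursions from the last reception time $k_0$, obtaining $\hat{x}_k^s-\bar{x}_k=\sum_{i=k_0}^{k-1}A^{k-1-i}r_i$ with $r_i$ the accumulated innovations, and then arguing that this sum is \emph{independent} of $\bar{x}_k$; you instead observe that $\bar{x}_k$ is $\mathcal{I}_{k-1}^c$-measurable and that $\mathbb{E}[x_k\mid\mathcal{I}_{k-1}^c]=\mathbb{E}[\hat{x}_k^s\mid\mathcal{I}_{k-1}^c]=\bar{x}_k$ by the tower property, which yields the zero conditional mean directly without computing the innovation sums. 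The explicit unrolling buys the paper a stronger conclusion (full independence, which it reuses to justify the cross-term cancellations), whereas your argument is shorter, needs only the projection property of the conditional-mean estimator, and establishes exactly what the lemma asserts. For (c) the two proofs coincide in substance: both use the three-point decomposition through $x_k$ (yours as $(x_k-\bar{x}_k)-\tilde{x}_k$, the paper's as $(\hat{x}_k^s-x_k)+(x_k-\bar{x}_k)$), kill the cross terms by orthogonality of $\tilde{x}_k$ to $\mathcal{I}_k^s$-measurable quantities, and reduce to $f^{\tau_{k-1}+1}(\bar{P})-\bar{P}$; like the paper, you take the identity $\mathbb{E}[(x_k-\bar{x}_k)(x_k-\bar{x}_k)^\top\mid\mathcal{I}_{k-1}^c]=f^{\tau_{k-1}+1}(\bar{P})$ as a standard consequence of the linear--Gaussian structure. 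Your closing remark that the conditioning ``$\mid\bar{x}_k$'' must really be read as conditioning on $\mathcal{I}_{k-1}^c$ (so that $\tau_{k-1}$ is also fixed) is a legitimate point of bookkeeping that the paper glosses over but implicitly uses when it later conditions on $(\bar{x}_K,\tau_{K-1})$.
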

\begin{pf}
See Appendix A. 
\end{pf}

\subsection{Model of the adversary}
\label{sec23}
Consider an adversary that intends to degrade the control performance by deteriorating the availability of the communication channel. In particular, the adversary may launch a DoS attack, e.g.,  by increasing the noise at the receiver, and decreasing the successful transmission probability \cite{li2016sinr}.
We model the DoS attack as a binary process where the successful transmission probability is lowered to $\lambda_a<\lambda$ when an attack is launched and remains unchanged otherwise.\footnote{The conclusions presented in our work can be easily extended to adversaries with multi-level attack capabilities.} 

Let $a_k$ be an indicator variable of the DoS attack, i.e. 
\begin{align}
a_k=\left\{
\begin{array}{ll}
1 & \quad \text{if an attack is launched at time $k$},\\
0 & \quad \text{if no attack is launched at time $k$}.
\end{array}
\right.\nonumber
\end{align}
In view of (\ref{prob}), we have 
\begin{equation}
\begin{aligned}
&\mathbb{P}\Big(\gamma_k^c=0\mid (\nu_k,a_k)=(0,j)\Big) = 1,~j\in\{0,1\}, \\
&\mathbb{P}\Big(\gamma_k^c=0\mid (\nu_k,a_k)=(1,0)\Big) = 1-\lambda, \\
&\mathbb{P}\Big(\gamma_k^c=0\mid (\nu_k,a_k)=(1,1)\Big) = 1-\lambda_a.
\end{aligned} \label{attack}
\end{equation}
From (\ref{prob}), (\ref{update_error}) and (\ref{attack}), the evolution of the error covariance $P_k$ in the presence of an attacker can be modeled as a Markov decision process (MDP) where $P_k$ is taken as the state, and the pair $(\nu_k,a_k)$ is  taken as the action. State transition probabilities are computed accordingly as
\begin{equation}
\begin{aligned}
&\mathbb{P}\Big(P_{k}=f(P_{k-1})\mid P_{k-1},(\nu_k,a_k)=(0,j)\Big) = 1, ~j\in\{0,1\},\\
&\mathbb{P}\Big(P_{k}=f(P_{k-1})\mid P_{k-1},(\nu_k,a_k)=(1,0)\Big) = 1-\lambda,\\
&\mathbb{P}\Big(P_{k}=f(P_{k-1})\mid P_{k-1},(\nu_k,a_k)=(1,1)\Big) = 1-\lambda_a,\\ &\mathbb{P}\Big(P_{k}=P_k^s\mid P_{k-1},(\nu_k,a_k)=(1,0)\Big) = \lambda,\\ &\mathbb{P}\Big(P_{k}=P_k^s\mid P_{k-1},(\nu_k,a_k)= (1,1)\Big) = \lambda_a. 
\end{aligned}\label{tran}
\end{equation}
In addition, assume that the attacker can overhear the acknowledgments sent from the controller to the sensor and make use of them to efficiently schedule the attacks. Denote the information available to the attacker at time $k$ as $\mathcal{I}_{k}^a$. We then have
\begin{equation}
\mathcal{I}_k^a=\{\gamma_0,\dots,\gamma_k\}\label{in_attack}
\end{equation} 
and, when comparing with \eqref{information}, note that the attacker has significantly less information than the controller-scheduler.
\section{Co-design of control and scheduling } \label{sec3}
In this section, we formulate the interactive decision making process as a partially observable stochastic game   and show that, from the controller's perspective, the design of the control and scheduling can be conducted separately without sacrificing   optimality. This property simplifies the game into a complete information stochastic game.   

\subsection{Design of the partially observable stochastic game}
Consider that the legitimate controller aims at optimizing the control performance by using the information in $\mathcal{I}_{k-1}^c$ to determine the transmission command $\nu_k$ and adjust the input $u_k$. In particular, the controller intends to minimize a linear combination of a quadratic term of the state $x_k$ and the input $u_k$, as well as the controller's transmission cost and the negative of the adversary's attacking cost:
\begin{equation}
\begin{aligned}J=\frac{1}{K}\mathbb{E}\Big(&\sum_{k=0}^{K-1}x_{k}^\top W x_{k}+u_k^\top U u_k+c_s \nu_k-c_a a_k\\
&+x_K^\top W x_K\mid \mathcal{I}_{k-1}^c\Big).
\end{aligned}  \label{cost}
\end{equation}
Here $c_s$ denotes the cost of transmission \textcolor{black}{generated by the controller and the sensor}. $c_a$ denotes the cost of launching a jamming attack. $W$ and $U$ are positive definite matrices. Conversely, the adversary intends to maximize $J$ by smartly scheduling the attack $a_k$ according to its observations $\mathcal{I}_{k-1}^a$. 

\par Assume the two agents select their actions independently. The interaction can then be formulated as a zero-sum stochastic game. Specifically, we denote the stochastic game $\mathcal{G}$ as a tuple $\langle \mathcal{I},\mathcal{S},\{b^0\},\mathcal{A}_c,\mathcal{A}_a,\mathcal{O}_c,\mathcal{O}_a,\mathcal{P},\mathcal{C}\rangle$, where
\begin{itemize}
	\item $\mathcal{I}$ denotes the number of agents. Here we have two agents, i.e. the controller and the attacker.
	\item $\mathcal{S}$ denotes the set of states $s_k$. Here we define $$s_k=[x_k,P_{k-1}],$$ where $x_k$ is the system state vector taking values from the continuous space $\mathbb{R}^{n_x}$ and the error covariance $P_{k-1}$ takes values from a countable set \begin{equation}\mathcal{I}_p=\{P_k^s,f(P_k^s),\dots,f^{K-k}(P_k^s),\dots\},\label{ds}\end{equation} with $k\in[0,K]$. See (\ref{update_error}).
	\item $\{b^0\}$ denotes the initial value of the state. 
	\item $\mathcal{A}_c$ denotes the action space of the controller. In our problem, the controller has two actors, i.e., the system input $u_k$ which is continuous and the scheduling command $\nu_k\in\{0,1\}$.
	\item $\mathcal{A}_a$ denotes the action space of the attacker, which is $a_k\in\{0,1\}$.
	\item $\mathcal{O}_c$ denotes the set of observations available to the controller. In accordance with the causality, the controller determines the transmission command first and update the system input $u_k$ based on the latest state estimate $\hat{x}_k$ if the packet is successfully received. Therefore, the information available to the determination of $\nu_k$ should be $I_{k-1}^c$, while the information available to the determination of $u_k$ is $I_k^c$.  According to Proposition \ref{prop1}, the optimal estimate of $x_k$ from $\mathcal{I}_{k-1}^c$ is $\bar{x}_k$ and the error covariance $P_{k-1}$ can be inferred from the sequence of $\gamma_k^c$. We define the observation of the actor $\nu_k$ as 
\begin{equation}
o_k^{\nu} = \{ \bar{x}_k,~P_{k-1}\}.\nonumber
\end{equation}	
Given $\mathcal{I}_k^c$, $\hat{x}_k^s$ is available to the controller if $\gamma_k^c=1$. Therefore, we define the observation of the actor $u_k$ as
\begin{equation}
o_k^{u} = \{ \bar{x}_k,~\gamma_k^c\hat{x}_k^s,~P_{k-1}\}.\nonumber
\end{equation}	
In addition, the error covariance $P_k^s$ is taken as common knowledge to all agents.

	\item $\mathcal{O}_a$ denotes the set of observations available to the attacker. As detailed in Section \ref{sec23}, the set $\mathcal{I}_{k-1}^a$ collects all the information available to the adversary, from which only $P_{k-1}$ can be inferred according to (\ref{update_error}). Therefore, we define the observation of the adversary as
 $$o_k^a = \{P_{k-1}\}.$$
	\item $\mathcal{P}$ is the transition probabilities. From (\ref{m1}), we may notice that the transition probability of $x_k$ depends on $u_{k-1}$ and the distribution of $w_{k-1}$:
	\begin{equation}
	\mathbb{P}(x_{k}\mid x_{k-1},u_{k-1})\sim\mathcal{N}(Ax_{k-1}+Bu_{k-1},Q).\nonumber
	\end{equation}
	In addition, the evolution of  $P_k$ depends on the actions $\nu_k$ and $a_k$ as given in (\ref{tran}).
	\item $\mathcal{C}$ denotes the cost function. From (\ref{cost}), we have the immediate cost at time $k$ given as
	\begin{align}
	c_k = x_{k}^\top W x_{k} + u_k^\top U u_k + c_s \nu_k - c_a a_k. \nonumber
	\end{align}
	The controller aims at minimizing the cost while the attacker aims at maximizing the cost. 
\end{itemize}
\begin{rem} \label{rem_game}
Note that the controller can infer $\bar{x}_k$ and $P_{k-1}$ from $\mathcal{I}_{k-1}^c$ according to  (\ref{information}) and (\ref{bar_x}). After the transmission action $\nu_{k}$ is determined, the controller has a stochastic access to $\hat{x}_k^s$ such that $\hat{x}_k$ is available. Beyond that, the attacker can infer $P_{k-1}$ from $\mathcal{I}_{k-1}^a$. Since neither of the two agents have complete information of the state $s_k$, the game is a partially observable stochastic game with a hybrid of discrete and continuous states \cite{miao2018hybrid}. 
In the following section, we employ dynamic programming to show that the design of control and scheduling can be conducted separately without sacrificing optimality. More importantly, this enables us to simplify the problem into a complete information game with countably infinite discrete states that can be efficiently solved.
\end{rem}

\subsection{Separation of scheduling and control}
In this section, we first consider the case where the time horizon $K$ in \eqref{cost} is finite, and then extend the conclusions to infinite horizons, i.e. $K\rightarrow\infty$. To state our first result, we define
\begin{equation}
\begin{aligned}
&S_{K}=W, \quad F_k=(B^\top  S_{k+1}B+U)^{-1}B^\top S_{k+1}A,\\
& S_k= A^\top  S_{k+1}A +W-A^\top S_{k+1}BF_k,
\end{aligned}  \label{siter}
\end{equation}
for $k=K-1,\dots,0$.
\begin{thm} 
	\label{nash}
	For the stochastic game $\mathcal{G}$ with a finite $K$, the stationary Nash equilibrium exists and the design of the control and scheduling can be conducted separately. In particular, the optimal control policy is in the form of 
	\begin{align}	
	u_k^\star = -F_k\hat{x}_k,\label{optimalcontrol}
	\end{align}
	and the optimal transmission policy and the attack policy, denoted as $\pi_k^c$ and $\pi_k^a$ respectively, can be derived by solving the following minimax optimization problem, 
	\begin{align}
	&\min_{\{\nu_k=\pi_{k}^c(\mathcal{I}_{k-1}^c)\}}\max_{\{a_k=\pi_k^a(\mathcal{I}_{k-1}^a)\}} \sum_{k=0}^{K-1}\Big(\text{tr}\big((A^\top  S_{k+1}A+W-S_k)\nonumber\\&\times
	\mathbb{E}(P_k)\big)+ c_s \nu_k - c_a a_k\big)\Big). \label{simple}
	\end{align}
Moreover, the resulting optimal policies depend on the triplet $(P_{k-1},P_k^s,k)$,  such that the  game $\mathcal{G}$ is equivalent to a complete-information stochastic game with finite discrete states.
\end{thm}
{\color{black}{
\begin{pf}
 From the controller's perspective, the game can be formulated in a recursive form as
\begin{equation}
\begin{aligned}
J_{K}^c = &x_{K}^\top S_{K} x_{K},\\
J_k^c= & \min_{\pi_k^c(o_{k}^{\nu}),~\pi_k^u(o_{k}^{u})}~ \mathbb{E}\left(\max_{\pi_k^a(o_{k}^a)} \mathbb{E}\big( c_k+ J_{k+1}^c \big)\right),
\end{aligned} \label{recursive_control}
\end{equation}
with $k=K-1,\dots,0$. Similarly, from the adversary's perspective, the game can be expressed as
\begin{equation}
\begin{aligned}
J_{K}^a = &x_{K}^\top S_{K} x_{K},\\
J_k^a= & \max_{\pi_k^a(o_{k}^a)}\mathbb{E}\left(\min_{\pi_k^c(o_{k}^{\nu}),~\pi_k^u(o_{k}^{u})}  \mathbb{E}\big( c_k+ J_{k+1}^a \big)\right).
\end{aligned}
\end{equation}
First, set $k=K-1$. Denote $\pi_k^{a\star}(o_k^a)$ as the optimal policy adopted by the adversary. (\ref{recursive_control}) can be equivalently expressed as
\begin{align}
J_{k}^c =& \min_{\nu_k}\mathbb{E} \Big(\min_{u_k}\mathbb{E}\Big(\mathbb{E}\big(x_k^\top W x_k + u_k^\top U u_k + c_s\nu_k-c_aa_k \nonumber\\
&+x_{k+1}^\top S_{k+1} x_{k+1}\mid a_k=\pi_k^{a\star}(o_k^a)\big)\mid o_k^{u}\Big)\mid o_k^{\nu}\Big). \label{JKCn1}\\
= &  \min_{\nu_k}\mathbb{E} \Big(\min_{u_k} \mathbb{E}\big(x_k^\top W x_k + u_k^\top U u_k + c_s\nu_k-c_aa_k \nonumber\\
&+x_{k+1}^\top S_{k+1} x_{k+1}\mid  \gamma_k^c\hat{x}_k^s,\bar{x}_k\big)\mid a_k=\pi_k^{a\star}(P_{k-1}),\nonumber\\&\bar{x}_k,P_{k-1}\Big).  \label{JKCn2}
\end{align} 
 (\ref{JKCn2}) holds since the attacker's action $a_k$ only has an impact on $\gamma_k^c$ as shown in (\ref{attack}). 

According to Proposition 1 in \cite{gupta2005lqg}, the optimal input is given as
\begin{equation}
u_k^\star=\left\{\begin{array}{ll}
-F_k\bar{x} & \gamma_k^c=0\\
-F_k\hat{x}_k^s & \gamma_k^c = 1
\end{array}\right. , \label{control_K}
\end{equation}
where the optimal feedback gain is independent of packet arrivals. 

Substitution of (\ref{control_K}) into (\ref{JKCn2}) provides
\begin{equation}
\begin{aligned}
J_k^c=
&\min_{\nu_k}\mathbb{E}\Big( (1-\gamma_k^c)J_{k,\gamma_k^c=0}^c + \gamma_k^c J_{k,\gamma_k^c=1}^c  \mid \bar{x}_k,P_{k-1},\\
&a_k=\pi_k^{a\star}(P_{k-1})\Big),
\end{aligned} \nonumber
\end{equation}
where
\begin{equation}
\begin{aligned}
J_{k,\gamma_k^c=0}^c =
&\bar{x}_k^\top S_k \bar{x}_k + \text{tr}((W+ A^\top S_{k+1} A)P_k\\
&+\text{tr}(S_{k+1}Q)+c_s\nu_k-c_aa_k,\\
J_{k,\gamma_k^c=1}^c
=  & \hat{x}_k^{s\top} S_k \hat{x}_k^s + \text{tr}((W+ A^\top S_{k+1} A)P_k^s)\\
&+\text{tr}(S_{k+1}Q)+c_s\nu_k-c_aa_k.
\end{aligned} \nonumber
\end{equation}
In view of the relationship between $\gamma_k^c$ and $(\nu_k,a_k)$ given in (\ref{attack}), $J_k^c$ is written as
\begin{equation}
\begin{aligned}
J_k^c=
&\min_{\nu_k} \mathbb{E}\Big((1-\nu_k(1-a_k)\lambda-\nu_ka_k\lambda_a)[\bar{x}_k^\top S_k\bar{x}_k+\\
&\text{tr}\big((W+A^\top S_{k+1}A)f(P_{k-1})\big)]+(\nu_k(1-a_k)+\\
&v_ka_k\lambda_a)[\mathbb{E}(\hat{x}_k^{s\top} S_k\hat{x}_k^s\mid \bar{x}_k)+\text{tr}\big((W+A^\top S_{k+1}A)P_k^s\big)]\\
&+c_s\nu_k-c_aa_k\mid a_k=\pi_k^{a\star}(P_{k-1}),P_{k-1}\Big)+\text{tr}(S_{k+1}Q).
\end{aligned} \label{jkc}
\end{equation}
We can further prove with Lemma \ref{lemma1} (b) and (c) that
\begin{equation}
\begin{aligned}
\bar{x}_k^\top S_k \bar{x}_k = & \mathbb{E}\Big(x_k^\top S_k x_k - (x_k-\bar{x}_k)^\top S_k (x_k-\bar{x}_k)\mid \bar{x}_k,P_{k-1}\Big) - \\
& \underbrace{\mathbb{E}\Big(\bar{x}_k^\top S_k(x_k-\bar{x}_k)-(x_k-\bar{x}_k)^\top S_k \bar{x}_k\mid \bar{x}_k,P_{k-1}\Big)}_{=0}\\
=& \mathbb{E}\Big(x_k^\top S_k x_k\mid \bar{x}_k,P_{k-1}\Big) - \mathbb{E}\Big(\text{tr}(S_k(x_k-\bar{x}_k)\times\\
&(x_k-\bar{x}_k)^\top)\Big)\\
=& \mathbb{E}\Big(x_k^\top S_k x_k\mid \bar{x}_k,P_{k-1}\Big)-\text{tr}\big(S_kf(P_{k-1})\big),
\end{aligned} \label{s1}
\end{equation}
and
\begin{equation}
\begin{aligned}
\mathbb{E}(\hat{x}_k^{s\top} S_k \hat{x}_k^s\mid \bar{x}_k) = & \mathbb{E}\Big(\bar{x}_k^\top S_k \bar{x}_k + (\hat{x}_k^s-\bar{x}_k)^\top S_k(\hat{x}_k^s-\bar{x}_k)\mid \bar{x}_k\Big)\\
= & \bar{x}_k^\top S_k \bar{x}_k + \text{tr}\big(S_k(f(P_{k-1})-P_k^s)\big)\\
= & \mathbb{E}\Big(x_k^\top S_k x_k\mid \bar{x}_k,P_{k-1}\Big)-\text{tr}\big(S_kP_k^s\big).
\end{aligned} \label{s2}
\end{equation}
Substitution of (\ref{s1}) and (\ref{s2}) into (\ref{jkc}) yields that 
\begin{equation}
\begin{aligned}
&J_k^c = \mathbb{E}(x_k^\top S_k x_k\mid \bar{x}_k,P_{k-1}) + \text{tr}(S_{k+1}Q)+ \min_{\nu_k}\mathbb{E} \Big[c_s\nu_k-\\
&c_aa_k + \text{tr}\Big((W+A^\top S_{k+1}A-S_k)\big((1-\nu_k(1-a_k)\lambda-\\
&\nu_ka_k\lambda_a)f(P_{k-1})+(\nu_k(1-a_k)\lambda+v_ka_k\lambda_a)P_k^s\big)\Big)\\
&\mid a_k=\pi_k^{a\star}(P_{k-1}),P_{k-1}\Big].
\end{aligned}  \label{jkc_new}
\end{equation}
Grouping all the terms that contain $\nu_k$, we may notice that the coefficients of $\nu_k$ are independent
of the state estimate $\bar{x}_k$. This important property indicates that the optimal scheduling command $v_k$ is independent of $\bar{x}_k$. 
Next, we look at $J_k^a$. Similar to the derivation of $J_k^c$, we have 
\begin{equation}
\begin{aligned}
J_{k}^a =& \max_{a_k}\mathbb{E}\Big( \mathbb{E}\big(x_k^\top W x_k + u_k^\top U u_k + c_s\nu_k-c_aa_k+\\&x_{k+1}^\top S_{k+1} x_{k+1}\mid u_k= \pi_k^{u\star}(o_k^u),\nu_k=\pi_k^{c\star}(o_k^c)\big)\mid o_k^a\Big),\\
=& \mathbb{E}(x_k^\top S_k x_k\mid P_{k-1}) + \text{tr}(S_{k+1}Q)+ \max_{a_k} \mathbb{E}\Big[c_s\nu_k-\\
&c_aa_k + \text{tr}\Big((W+A^\top S_{k+1}A-S_k)\big((1-\nu_k(1-a_k)\lambda\\
&-\nu_k a_k\lambda_a)f(P_{k-1})+(\nu_k(1-a_k)\lambda+v_k a_k\lambda_a)P_k^s\big)\Big)\\
&\mid \nu_k=\pi_k^{\nu\star}(P_{k-1}),P_{k-1}\Big].
\end{aligned} \label{jka}
\end{equation}
 Compare (\ref{jkc_new}) and (\ref{jka}). $J_k^c$ minimizes a convex function and $J_k^a$ maximizes a concave function. According to Kakutani fixed-point theorem \cite{kakutani1941generalization} , a Nash-equilibrium exists. For  $k\in[0,K-1)$, the existence of the Nash-equilibrium can be proved in the same way.

To complete the proof, note that in (\ref{jkc_new}) and (\ref{jka}), the decision variables both depend on $P_{k-1}$ and the matrix $W+A^\top S_{k+1}A$, which is a function of time $k$. Beyond that, when $k\in[0,K-1]$, $P_{k-1}$ takes values from a finite set according to (\ref{xks}) and (\ref{update_error}) and its value is available in both $o_k^\nu$ and $o_k^a$. Hence, the game is a complete-information game with finite discrete states. Since the schedule commands $\nu_k$ and $a_k$ are both independent of the state $\bar{x}_k$, from  \cite[Theorem 1]{leong2017event}, the optimal cost is in the form of
\begin{equation}
\begin{aligned}
& \mathbb{E}(x_0^\top S_0 x_0)+\sum_{k=0}^{K-1}\text{tr}(S_{k+1}Q)+ \sum_{k=0}^{K-1}\Big(\text{tr}\big(( A^\top S_{k+1}A\\
&+W-S_k)\mathbb{E}(P_k)\big)+ c_s \nu_k - c_a a_k \Big).
\end{aligned} \label{long}
\end{equation}
Since the first two terms are irrelevant to $\{\nu_k\}$ and $\{a_k\}$, the solution to (\ref{simple}) is the same as that to (\ref{long}).
\end{pf} }}
\begin{rem}
	Theorem \ref{nash} establishes that in the present setting, the optimal scheduling command $\nu_k$ is independent of the state estimate. This fundamental property is often assumed without detailed justification in co-design works, e.g., \cite{leong2017event}.  
	\par Our result guarantees that the separate design does not sacrifice optimality when the controller works as the scheduler. However, if the sensor schedules, such as considered in \cite{zhang2020stochastic,knorn2017optimal},  then this separation property does not hold. In fact, consider that $\nu_k$ is determined by the sensor, so that  $\hat{x}_k^s$ and $\bar{x}_k$ are available before the decision is made. $o_k^\nu$ contains $hat{x}_k^s$ and therefore the expectations in  (\ref{JKCn2}) should be computed conditioned on both $\bar{x}_k$ and $\hat{x}_k^s$, in which case both $\bar{x}_k$ and $\hat{x}_k^s$ will appear in the coefficient terms of $\nu_k$. Therefore, the optimal schedule $\nu_k$ will in general depend on $\bar{x}_k$ and $\hat{x}_k^s$. Since $\nu_k$ contains information of $\hat{x}_k^s$, the optimal estimate at the controller does not follow (\ref{update_state}), but becomes a nonlinear estimation problem, as analyzed in \cite{li2019information,wu2016finite}, and will further complicate the design of the feedback gain. \hfill $\square$.
\end{rem}

Theorem \ref{nash} can be easily extended to infinite horizons provided that the value of $J$ is bounded. 
\begin{cor} \label{cor2}
	Suppose that {\color{black}{ \begin{equation}\lambda_a>1-\frac{1}{|\sigma_{\max}(A)|^2},\label{condition}
	\end{equation} 
	where $|\sigma_{\max}(A)|$ denotes the spectral radius of $A$.}}
	Then the infinite horizon stochastic game $\mathcal{G}$ has a Nash equilibrium. The optimal control policy is given as 
	\begin{align}	
	u_k^\star = - (B^\top  S_{\infty}B+U)^{-1}B^\top S_{\infty}A\hat{x}_k,\label{infinite}
	\end{align}
	where, cf (\ref{siter}), 
	\begin{multline*}
	S_{\infty}= A^\top S_{\infty}A+W
	-A^\top S_{\infty}B( B^\top S_{\infty}B+U)^{-1}B^\top S_{\infty}A.
	\end{multline*}
	The optimal scheduling policies  can be derived from 
	\begin{equation}
	\begin{aligned}
	\min_{\{\nu_k=\pi^c(P_{k-1})\}}&\max_{\{a_k=\pi^a(P_{k-1})\}}\tilde{J}_{(\pi^c,\pi^a)}=\lim_{K\rightarrow\infty}\frac{1}{K} \sum_{k=0}^{K-1}\Big(c_s \nu_k \\&- c_a a_k+\text{tr}\Big(M_{\infty}\mathbb{E}(P_k)\Big)\Big),\label{obja}
	\end{aligned}
	\end{equation}
	where $M_{\infty}=A^\top  S_{\infty}A+W-S_{\infty}$ and $P_k$ takes values from the countably infinite set $\{\bar{P},f(\bar{P}),f^2(\bar{P})\dots\}$ cf (\ref{ds}).
\end{cor}
In Corollary \ref{cor2},  (\ref{condition}) gives a sufficient condition to ensure the boundedness of (\ref{obja}), since even if the attacker launches the DoS attack at all time instances, the expected estimation error covariances are ensured to be bounded. 
Given that the Kalman filter converges exponentially fast, we can study the infinite horizon optimization problem in (\ref{obja}) by assuming the steady state of the Kalman filter, i.e. $P_k^s=\bar{P}$. Moreover, since the feedback gain in (\ref{infinite}) also converges to a constant matrix, the scheduling policies become time-independent. They solely depend on $P_{k-1}$.  Therefore, the infinite horizon stochastic game $\mathcal{G}$ is equivalent to a complete-information game with countably infinite states. 
{\color{black}{
\begin{rem} \label{assumption_control}
We may notice that the control policy $\pi_k^c$ is a standard state feedback strategy with time dependent feedback gain (the finite horizon case) or contant feedback gain (the infinite horizon case) according to Theorem \ref{nash} and Corollary \ref{cor2}. The gain is  independent of the packet dropouts $\gamma_k$ as well as the attacker's policy. Therefore, the controller can synchronize the control policy with the sensor before the process starts such that the separation principle holds at the sensor.
\end{rem}
}}

\section{Nash Q-learning} 
\label{sec6}
In this section, we detail the algorithms to solve the stochastic game $\mathcal{G}$ in practice. If $K$ is finite, the problem can be easily solved using the idea of dynamic programming. We focus on the case where $K$ is infinite. 

Nash Q-learning is a well known multi-agent reinforcement learning algorithm for stochastic games \cite{hu2003nash,li2016sinr}. However, it is only applicable to MDPs with finite states and discounted costs. To apply Nash Q-learning to the current setup, we have to do two modifications to the original problem in (\ref{obja}): (1) aggregate the states of the MDP to make the states finite; (2) approximate the average cost with a discounted cost. In particular, we consider a stochastic game, denoted as $\mathcal{G}^N$, in the form of  
	\begin{equation}
\begin{aligned}
\hat{J}=\min_{\{\nu_k=\pi^c(\bar{\tau}_{k-1})\}}&\max_{\{a_k=\pi^a(\bar{\tau}_{k-1})\}}\lim_{K\rightarrow\infty}  \sum_{k=0}^{K-1}  \eta^k\mathbb{E}\Big[c_s \nu_k \\& -c_a a_k+\text{tr}\Big(M_\infty f^{\bar{\tau}_k}(\bar{P})\Big)\Big],
\end{aligned}\label{obja_2}
\end{equation}
where the discount factor $0<\eta<1$ and $\bar{\tau}_k$ follows, cf. (\ref{hold}),  
\begin{align}
\bar{\tau}_k=\left\{
\begin{array}{ll}
0 & \quad \nu_k\gamma_k^c=1\\
\max\{\bar{\tau}_{k-1} + 1,N-1\} & \quad \nu_k\gamma_k^c=0
\end{array}
.\right. \label{hold2}
\end{align}
\textcolor{black}{In (\ref{obja_2}), we replace $\mathbb{E}(P_k)$ with $\mathbb{E}\big(f^{\bar{\tau}_k}(\bar{P})\big)$} since there is a one-to-one correspondence between
$P_{k}$ and the holding time $\bar{\tau}_{k}$ according to (\ref{hold}). Compare (\ref{hold2}) and (\ref{hold}). In (\ref{hold2}), the states with $\tau_k\geq N-1$ are aggregated as a single state. In this way, we keep $N$ states in the MDP to facilitate the application of Nash Q-learning. In particular, we learn the $Q$-function, denoted as $Q(\bar{\tau},\nu_k,a_k)$, for $\bar{\tau}\in[0,N-2]$ according to the Nash Q-learning algorithm presented in \cite{li2016sinr}. The transmission and attack policies for $\bar{\tau}\in[0,N-2]$ are derived from (\ref{policy})
\begin{equation}
\begin{aligned}
&\pi_N^{c}(\bar{\tau}) =
\underset{\pi_N^c}{\arg\min}\underset{\pi_N^a}{\max}\underset{\nu_k,a_k}{\sum} Q(\bar{\tau},\nu_k,a_k)\mathbb{P}(\nu_k|\pi_N^c)\mathbb{P}(a_k|\pi_N^a),\\
&\pi_N^{a}(\bar{\tau}) =
\underset{\pi_N^a}{\arg\max}\underset{\pi_N^c}{\min}\underset{\nu_k,a_k}{\sum} Q(\bar{\tau},\nu_k,a_k)\mathbb{P}(\nu_k|\pi_N^c)\mathbb{P}(a_k|\pi_N^a).\\
\end{aligned}\label{policy}
\end{equation} 
Here $\pi_N^c(\bar{\tau})$ and $\pi_N^a(\bar{\tau})$ specify $\mathbb{P}(\nu_k\mid \bar{\tau}_{k-1}=\bar{\tau})$ and $\mathbb{P}(a_k\mid \bar{\tau}_{k-1}=\bar{\tau})$ for each $\bar{\tau}\in[0,N-2]$ and each $\nu_k,a_k\in\{0,1\}$. 
For $\bar{\tau}= N-1$, we set \begin{equation}
\begin{aligned}&\mathbb{P}(\nu_k=1\mid \bar{\tau}_{k-1}=N-1)=1, \\
&\mathbb{P}(a_k=1\mid \bar{\tau}_{k-1}=N-1)=1.
\end{aligned} \label{prob_Q}
\end{equation} 

Next, we show the $\epsilon$-optimality of the truncated policy. 

\begin{thm}
	Denote the truncated policy in (\ref{policy}) as $(\pi_N^{c^\star},\pi_N^{a^\star})$. Then, when $\eta$ is close to $1$ and $N$ is large, the pair $(\pi_N^{c^\star},\pi_N^{a^\star})$ provides an $\epsilon$-Nash equilibrium of the game $\mathcal{G}$, i.e.
	\begin{align}
	&\tilde{J}_{(\pi_N^{c^\star},\pi_N^{a^\star})}\leq \min_{\pi^c}\tilde{J}_{(\pi_c,\pi_N^{a\star})}+\epsilon_{\eta,N},\label{ep1}\\
	&\tilde{J}_{(\pi_N^{c^\star},\pi_N^{a^\star})}\geq\max_{\pi^a} \tilde{J}_{(\pi_N^{c^\star},\pi^{a})}-\epsilon_{\eta,N}, \label{ep2}
	\end{align}
	where $\lim_{\eta\rightarrow 1,~N\rightarrow\infty}\epsilon_{\eta,N}=0$ and $\tilde{J}$ is defined in (\ref{obja}). \label{equilibrium}
\end{thm} 
\begin{pf} See Appendix \ref{pfeb}. 
\end{pf}

Theorem \ref{equilibrium} indicates that the truncated policy derived from the discounted cost can make the value of the game arbitrarily close to the Nash-equilibrium by increasing the truncation horizon $N$ and controlling the discounted factor  $\eta$ close to $1$.

\section{Simulation} \label{sec4}
{\color{black}{Consider an infinite time horizon co-design problem with the system matrices set as
\begin{align}
A = \left[\begin{array}{cc}
1.2 & 1\\ 0 & 1
\end{array}\right],\quad B=\left[\begin{array}{c}
1\\2
\end{array}\right],\quad C = \left[\begin{array}{c}
1\\1
\end{array}\right]^\top. \nonumber
\end{align}
The noise covariances are taken as $Q=I_{n_s}$ and $R=I_{n_u}$. The weight matrices are set as $U=I_{n_u}$ and $W=I_{n_s}$. When there is no attack, the successful transmission rate is $\lambda=0.95$ and the transmission cost is set as $c_s=2000$. Accordingly, the optimal control law is computed from (\ref{infinite}) as $L=\left[\begin{array}{cc} -0.39 & -0.63 \end{array}\right]$.

We study the interaction between the attacker and the controller. Fix the attacking cost as $c_a=2000$. At instances that an attack is launched, the successful transmission rate drops to $\lambda_a$. We vary the parameter $\lambda_a$ from $0.3$ to $0.9$ to simulate attackers with different level of  interfere ability. A small $\lambda_a$ indicates that the attacker is powerful to cause significant packet dropouts. For each $\lambda_a$, the game is solved correspondingly via the Nash-Q learning in \cite{li2016sinr} with the truncation horizon $N=50$ and the discount factor $\eta = 1-(0.1)^{8}$. The resulted transmission policies present a threshold-type structure, i.e. 
\begin{align}
\nu_k = & \left\{
\begin{array}{cc}
0 & \bar{\tau}<\text{thres}_c\\
1 & \bar{\tau}\geq \text{thres}_c
\end{array}
\right. ,\quad
a_k =  \left\{
\begin{array}{cc}
0 & \bar{\tau}<\text{thres}_a\\
1 & \bar{\tau}\geq \text{thres}_a
\end{array}
\right. .\label{thres}
\end{align}
We run a Monte Carlo simulation with the derived transmission and attack policies with the simulation length $T$ taken as $10^6$.
As shown in Fig. \ref{case1}(a),  the averaged cost $\tilde{J}$ defined in (\ref{obja}) monotonically decreases with the increase of $\lambda_a$, indicating that the weaker the attacker is, the more easier the controller achieves a smaller cost. From Fig. \ref{case1} (b), we may note that the attack threshold $\text{thres}_a$ is always greater than or equal to the transmission threshold $\text{thres}_c$ to save the attacking cost. Meanwhile, we calculate the averaged transmission probability $p_c$ and averaged attack probability according to 
\begin{equation}
p_c=\frac{\sum_{k=0}^T \nu_k}{T+1},\quad p_c=\frac{\sum_{k=0}^T a_k}{T+1}.\nonumber
\end{equation}
Their trajectories are plotted in Fig. \ref{case1} (b). Compare Fig. \ref{case1} (b) and (c). When $\lambda_a$ gets larger than $0.45$, the attacker chooses to stay silent in most of the time since it can hardly cause a significant degrade of the controller's tracking performance while paying a high cost on attacking. From the controller's perspective, the transmission probability $p_c$ is decreasing when $\lambda_a\geq0.38$ to save transmission cost when the attacker becomes inactive. 

This numerical experiment shows that both the controller and the attacker achieve the Nash-equilibrium by learning from the interactions between each other without the knowledge of either $\lambda$ or $\lambda_a$. This is the main advantage of the game theoretic approaches. To verify this, we compare with the conventional fix-threshold approach. Assume that the adversary is smart and can adapt its own policy according to the controller's via the standard Q-learning. In Fig. \ref{comparison}, we compare the averaged cost $\tilde{J}$ corresponding to $\text{thres}_c=1,3,6$ with the cost $\tilde{J}$ derived from the Nash Q-learning. We may note that the policy derived with the Nash-Q learning always outperform the fixed threshold policy. 
}}

\begin{figure}[t]
	\centering
	\includegraphics[width=0.8\linewidth]{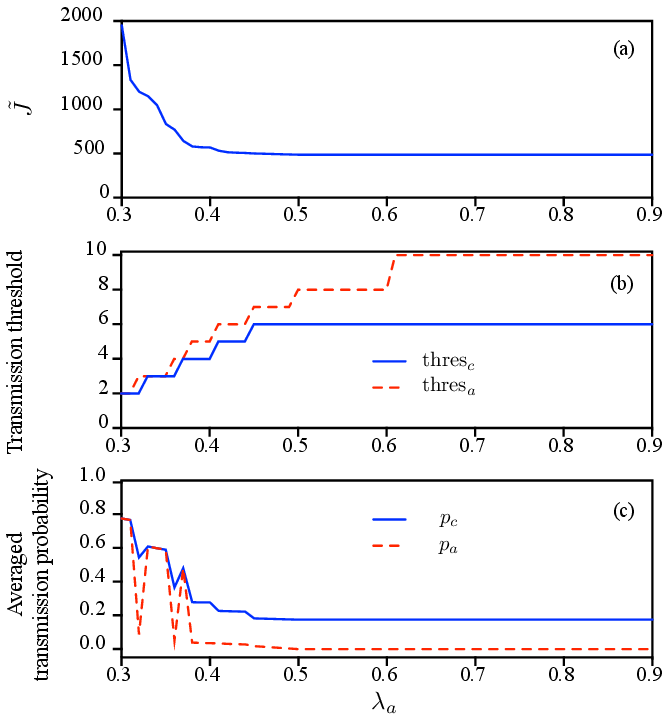}
	\caption{\scriptsize Trajectories of the cost function $\tilde{J}$, the transmission threshold and the averaged transmission probability with $\lambda_a$ varying from $0.3$ to $0.9$.  }
	\label{case1}
\end{figure}

\begin{figure}[b]
	\centering
	\includegraphics[width=1\linewidth]{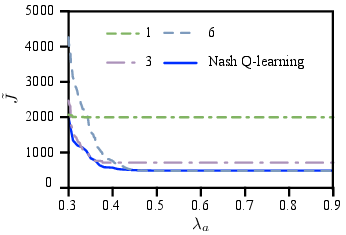}
	\caption{\scriptsize Comparisons between the Nash Q-learning and threshold-type policies with $\text{thres}_c=1,3,6$. }
	\label{comparison}
\end{figure}


\section{Conclusions}
\label{sec5}
In the present work, we have studied the design of control and scheduling in the presence of stochastic  DoS attacks. We have designed  a zero-sum partially observable stochastic game to jointly optimize the control and scheduling strategies. We have established that the optimal transmission scheduling command is independent of the state estimate and that the design of control and scheduling can be conducted separately without sacrificing optimality. Based on this property, the partially observable game is simplified as a complete information game. Moreover, we have proved that by applying a Nash Q-learning to the stochastic game with appropriately truncated states, an $\epsilon$-equilibrium can be derived. Future works may consider employing the sensor as the scheduler, in which case more information can be exploited for scheduling and enhanced security can be expected. 

\appendix

\section{Proof of Lemma \ref{lemma1}}
\label{pf_lemma1}
\begin{pf}
	Item (a) is the same as Lemma 4.1 in \cite{schenato2007foundations}. Item (b) can be proved based on \cite{demirel2018tradeoffs}. For that purpose, define $$r_{k-1}\triangleq  K_kCA\tilde{x}_{k-1}+K_kC w_{k-1}+K_kv_{k}.$$ Assume that the latest successful transmission occurs at time $k_0$ where $k_0\leq k$. According to Eq. (27) in  \cite{demirel2018tradeoffs}, we have
	$$\hat{x}_k^s-\bar{x}_k=\sum\nolimits_{i=k_0}^{k-1}A^{k-1-i}r_i.$$
	Note that $r_i$ is independent of $\hat{x}_{k_0}^s$ for any $ k_0\leq i \leq k-1$, and $\bar{x}_k$ and $u_i$ both depends on $\hat{x}_{k_0}^s$. It can be concluded that $\hat{x}_k^s-\bar{x}_k$ is independent of $\bar{x}_k$ and therefore Item (b) is proved. 
	
	Moreover, from Item (a) and (b), we have $\mathbb{E}(\tilde{x}_k\bar{x}_k^\top\mid \bar{x}_k)=0$, i.e. $\tilde{x}_k$ is independent  of $\bar{x}_k$.
	Hence, we have $
	\mathbb{E}\left(\tilde{x}_k\tilde{x}_k^\top \mid \bar{x}_k\right) =  \mathbb{E}\left(\tilde{x}_k\tilde{x}_k^\top \right)=P_k^s$. Then, Item (c) can be proved as follows:
	\begin{equation}
	\begin{aligned}
	&\mathbb{E}\left((\hat{x}_k^s-\bar{x}_k)(\hat{x}_k^s-\bar{x}_k)^\top \mid \bar{x}_k\right) \\
	= & \mathbb{E}\left((\hat{x}_k^s-x_k+x_k-\bar{x}_k)(\hat{x}_k^s-x_k+x_k-\bar{x}_k)^\top \mid \bar{x}_k\right) \\
	=& \mathbb{E}(\tilde{x}_k\tilde{x}_k^\top+(x_k-\bar{x}_k)(x_k-\bar{x}_k)^\top - 2\tilde{x}_k\underbrace{(x_k-\bar{x}_k)^\top}_{=\tilde{x}_k+\hat{x}_k^s-\bar{x}_k}\mid \bar{x}_k) \\
	=& \mathbb{E}((x_k-\bar{x}_k)(x_k-\bar{x}_k)^\top -\tilde{x}_k\tilde{x}_k^\top\mid \bar{x}_k)=  f(P_{k-1})-P_k^s.
	\end{aligned} \nonumber
	\end{equation}
\end{pf}

\section{Proof of Theorem \ref{equilibrium}} \label{pfeb}
\begin{pf}
	
	Denote $J^{\eta,N}_{(\pi_N^{c^\star},\pi_N^{a^\star})}$ as the value of (\ref{discount}) 
	\begin{equation}
	\begin{aligned}
	\lim_{K\rightarrow\infty}  (1-\eta)&\sum_{k=0}^{K-1} \eta^k\mathbb{E}\Big[c_s \nu_k  -c_a a_k+\text{tr}\Big(M_{\infty}f^{\bar{\tau}}(\bar{P})\Big)\Big],
	\end{aligned} \label{discount}
	\end{equation}
when $(\pi_N^{c^\star},\pi_N^{a^\star})$ is applied. 	
	Since $\pi_{N}^{c^\star}$ and 
	$\pi_{N}^{a^\star}$ give the stationary Nash equilibrium, we have
	\begin{equation}
	\max_{\pi_N^a} J^{\eta,N}_{(\pi_N^{c^\star},\pi_N^{a})}\leq J^{\eta,N}_{(\pi_N^{c^\star},\pi_N^{a^\star})} \leq \min_{\pi_N^c} J^{\eta,N}_{(\pi_N^c,\pi_N^{a^\star})}.
	\end{equation}
Next, we prove  (\ref{ep1}) by showing that there exists a scalar $\epsilon_{\eta,N}$ such that $\lim_{\eta\rightarrow 1,N\rightarrow\infty} \epsilon_{\eta,N}=0$ and 
	\begin{align}
	&\big|J^{\eta,N}_{(\pi_N^{c^\star},\pi_N^{a^\star})}-\min_{\pi^c}\tilde{J}_{(\pi^{c},\pi_N^{a^\star})}\big|<\frac{1}{2}\epsilon_{\eta,N},\label{min_t} \\	&\big|J^{\eta,N}_{(\pi_N^{c^\star},\pi_N^{a^\star})}-\tilde{J}_{(\pi_N^{c^\star},\pi_N^{a^\star})}\big|<\frac{1}{2}\epsilon_{\eta,N} .\label{ep}
	\end{align}
First, fix the attacker's policy to be $\pi_N^{a^\star}$. Define \begin{equation}
\begin{aligned}
&\pi^{c_0}=\arg\min_{\pi^c}\tilde{J}_{(\pi^{c},\pi_N^{a^\star})},\\ &\pi^{c_1}=\arg\min_{\pi^{c}}J^{\eta,\infty}_{(\pi^{c},\pi_N^{a^\star})}.
\end{aligned} \nonumber
\end{equation} According to the boundedness of $\tilde{J}$ and Abel's theorem \cite{hernandez2012discrete}, for any $\eta<1$, there exists a scalar $\tilde{\epsilon}_\eta>0$, such that
\begin{equation}
| \tilde{J}_{\pi^{c_0},\pi_N^{a^\star}}-J^{\eta,\infty}_{\pi^{c_0},\pi_N^{a^\star}}|\leq \tilde{\epsilon}_\eta, \label{com0}
\end{equation} 
and $\lim_{\eta\rightarrow 1}\tilde{\epsilon}_\eta=0$. Eq. (\ref{com0}) shows that the optimality gap induced by the discount factor diminishes when $\eta$ goes to $1$. Following this, we further analyze the optimality gap induced by state truncation.

From the transmission policy $\pi_N^{c^\star}$, we can derive the occupation measures, denoted as $\bar{\omega}\big(\bar{\tau},(\nu,a)\big)$, for the corresponding truncated MDP. Here $\bar{\tau}\in[0,N-1]$ and  $\bar{\omega}$
can be interpreted as the total expected discount time spent in the state-action pairs $\big(\bar{\tau},(\nu,a)\big)$ (Section 3.1 in \cite{altman1999constrained}).  Inspired by the Bellman equation, the occupation measures satisfy that
\begin{equation}
\begin{aligned}
&\sum_{\bar{\tau}\in[0,N-1]}\sum_{\nu,a\in\{0,1\}}\bar{\omega}(\bar{\tau},(\nu,a))=1,\quad \bar{\omega}(\bar{\tau},(\nu,a))\geq 0\\
&\sum_{\bar{\tau}\in[0,N-1]}\sum_{\nu,a\in\{0,1\}}\bar{\omega}(\bar{\tau},(\nu,a))(1\{\bar{\tau}'=\bar{\tau}\}-\eta \mathcal{P}_{\bar{\tau}\{\nu,a\}\bar{\tau}'})\\
&=(1-\eta)1(\bar{\tau}'=0),
\end{aligned}\label{occupation}
\end{equation}
Here $1(\cdot)$ denotes an indicator function. $\mathcal{P}_{\bar{\tau}\{\nu,a\}\bar{\tau}'}$ is a short form for the transition probability $\mathbb{P}\big(s_{k+1}=\bar{\tau}'\mid s_k=\bar{\tau},\nu_k=\nu,a_k=a\big)$. According to (\ref{occupation}) and the transition probability given in (\ref{tran}), we have
\begin{equation}
\begin{aligned}
&\sum_{\nu,a\in\{0,1\}}\bar{\omega}\big(\bar{\tau}+1,(\nu,a)\big) = \eta \sum_{a\in\{0,1\}} \bar{\omega}\big(\bar{\tau},(\nu=0,a)\big)\\
& +\eta \lambda 
\bar{\omega}\big(\bar{\tau},(\nu,a)=(1,1)\big)+\eta \lambda_a\bar{\omega}\big(\bar{\tau},(\nu,a)=(1,0)\big).
\end{aligned} \label{exponential}
\end{equation}
Since $0<\lambda_a<\lambda<1$,  $\bar{\omega}$ exponentially decreases, i.e. $$\sum_{\nu,a\in\{0,1\}}\bar{\omega}\big(\bar{\tau}+1,\nu,a\big)\leq \eta\sum_{\nu,a\in\{0,1\}}\bar{\omega}\big(\bar{\tau},\nu,a\big).$$
In addition, the transmission policy can be derived from $\bar{\omega}$ as 
\begin{equation}
\mathbb{P}\big((\nu_k,a_k)\mid\bar{\tau}\big)=\frac{\bar{\omega}(\bar{\tau},(\nu,a))}{\sum_{\nu_k,a_k\in\{0,1\}}\bar{\omega}(\bar{\tau},(\nu,a))}. \label{prob_occu}
\end{equation}
The value of the cost functions can be expressed in terms of $\bar{\omega}$ as
	\begin{align}
J^{\eta,N}_{(\pi_N^{c^\star},\pi_N^{a^\star})} =& \sum_{\bar{\tau}=0}^{N-1}\sum_{\nu,a\in\{0,1\}} \bar{\omega}\big(\bar{\tau},(\nu,a)\big)\text{tr}f^{\bar{\tau}}\big(\bar{P}\big).\label{trun}
\end{align}
Therefore, minimizing $J$ over $\pi^c_N$ is equivalent to minimizing it over $\bar{\omega}$.

Next, we consider the non-truncated MDP corresponding to the transmission policy $\pi_N^{c^\star}$. Denote the associated occupation measure as $\omega(\tau,(\nu,a))$, where $\tau\in[0,\infty)$. Then, $\omega$ satisfies (\ref{occupation}) with $N=\infty$. Moreover, similar to (\ref{prob_occu}), from (\ref{prob_Q}) we have  
\begin{equation}
\begin{aligned}
&\frac{\omega(\tau,(\nu,a))}{\sum_{\nu_k,a_k\in\{0,1\}}\omega(\tau,(\nu,a))}=\mathbb{P}\big((\nu_k,a_k)\mid\tau\big),~\text{if}~\tau <N-1,
\\
& \frac{\omega(\tau,(\nu,a)=(1,1))}{\sum_{\nu_k,a_k\in\{0,1\}}\omega(\tau,(\nu,a))}=1 \quad\quad\quad\quad\quad\quad~ \text{if}~ \tau\geq N-1.
\end{aligned} \label{omega}
\end{equation}
From (\ref{omega}) and (\ref{exponential}), if $\tau\geq N-1$, then we have
\begin{equation}
\begin{aligned}
&\omega\big(\tau,(\nu,a)\big)=0 \quad \text{if}~(\nu,a)\neq (1,1) \\
&\sum_{\nu,a\in\{0,1\}}\omega\big(\tau+1,(\nu,a)\big) \leq \eta \lambda_a \sum_{\nu,a\in\{0,1\}} \omega\big(\tau,(\nu,a)\big).
\end{aligned}
\end{equation}

Similar to (\ref{trun}), $J^{\eta,\infty}_{(\pi_N^{c^\star},\pi_N^{a^\star})} $ can be expressed in terms of $\omega$ as
\begin{align}
J^{\eta,\infty}_{(\pi_N^{c^\star},\pi_N^{a^\star})} =& \sum_{\tau=0}^\infty\sum_{\nu,a\in\{0,1\}} \omega\big(\tau,(\nu,a)\big)\text{tr}f^\tau\big(\bar{P}\big). \label{longh}
\end{align}	
From (\ref{occupation}) and (\ref{omega}), we have
\begin{equation}
\begin{aligned}
\omega\big(\bar{\tau},(\nu,a)\big)&=\bar{\omega} \big(\bar{\tau},(\nu,a)\big),\quad \bar{\tau}\in[0,N-1), \\
\sum_{i=N-1}^\infty\omega\big(i,(\nu,a)\big)&=\bar{\omega} \big(\bar{\tau},(\nu,a)\big),\quad \bar{\tau}=N-1. 
\end{aligned} \label{equal}
\end{equation}
The condition in  (\ref{condition}) ensures $\sum_{\nu,a\in\{0,1\}}\omega(i,(\nu,a))f^{i}(\bar{P})$  converges exponentially to $0$ with $i$. This guarantees the existence of $\hat{\epsilon}_N$ such that $\lim_{N\rightarrow\infty}\hat{\epsilon}_N=0$ and 
 \begin{equation}
\begin{aligned}
&\Big|\sum_{i=N-1}^\infty\sum_{\nu,a}\omega\big(i,(\nu,a)\big)f^{i}(\bar{P})-\bar{\omega}\big(N-1,(\nu,a)\big)f^{N-1}(\bar{P})\Big|\\
\leq & \Big|\sum_{i=N-1}^\infty\sum_{\nu,a}\omega\big(i,(\nu,a)\big)\big(f^{i}(\bar{P})-f^{N-1}(\bar{P})\big)\Big| \\
\leq &\Big|\sum_{i=N-1}^\infty\sum_{\nu,a}\omega\big(i,(\nu,a)\big)f^{i}(\bar{P})\Big|\leq  \hat{\epsilon}_N,
\end{aligned}\label{error}\end{equation} 
From (\ref{trun}) and (\ref{longh})- (\ref{error}), we can prove that 
 \begin{equation}|J^{\eta,N}_{(\pi_N^{c^\star},\pi_N^{a^\star})}-J^{\eta,\infty}_{(\pi_N^{c^\star},\pi_N^{a^\star})}|\leq \hat{\epsilon}_N. \label{com1}
\end{equation} According to the optimality of $\pi^{c_0}$, we have 
\begin{equation}
J^{\eta,\infty}_{(\pi^{c_0},\pi_N^{a^\star})} \leq J^{\eta,\infty}_{(\pi_N^{c^\star},\pi_N^{a^\star})}.\label{com2}
\end{equation}
(\ref{com1}) and (\ref{com2}) together give that 
\begin{equation}
J^{\eta,\infty}_{(\pi^{c_0},\pi_N^{a^\star})} \leq J^{\eta,N}_{(\pi_N^{c^\star},\pi_N^{a^\star})}+\hat{\epsilon}_N.\label{com3}
\end{equation}
In a similar way, using the boundedness of $J^{\eta,\infty}_{(\pi^{c_0},\pi_N^{a^\star})}$ and the optimality of $\pi_N^{c^\star}$ to the truncated optimization problem, we can prove the existence of $\bar{\epsilon}_N$ such that  $\lim_{N\rightarrow\infty}\bar{\epsilon}_N=0$ and
\begin{equation}
J^{\eta,N}_{(\pi_N^{c^\star},\pi_N^{a^\star})} \leq J^{\eta,\infty}_{(\pi^{c_0},\pi_N^{a^\star})}+\bar{\epsilon}_N.\label{com4}
\end{equation}
Eq. (\ref{com0}), (\ref{com3}), and (\ref{com4}) prove (\ref{min_t}) since any $\epsilon_{\eta,N}$ satisfying that $2\tilde{\epsilon}_\eta+2\max(\hat{\epsilon}_N,\tilde{\epsilon}_N)\leq \epsilon_{\eta,N}$ and $\lim_{\eta\rightarrow 1,N\rightarrow\infty}\epsilon_{\eta,N}=0$ can make (\ref{min_t}) hold. (\ref{min_t}) and (\ref{ep}) give (\ref{ep1}). Eq. (\ref{ep2}) can be proved similarly.
\end{pf}

\bibliographystyle{plain}        
\bibliography{codesign}

\end{document}